\theoremstyle{plain}
\newtheorem{thm}{Theorem}
\theoremstyle{plain}
\theoremstyle{definition}
\newtheorem{definition}{Definition}
\newtheorem{lemma}{Lemma}
\newtheorem*{lemma*}{Lemma}
\theoremstyle{remark}
\title{{\huge Multi-scale Spectrum Sensing in Small-Cell mm-Wave Cognitive Wireless Networks}}
\author{Nicolo Michelusi, Matthew Nokleby, Urbashi Mitra, and Robert Calderbank
\thanks{This research has been funded in part by the following grants: ONR N00014-15-1-2550, ONR N00014-09-1-0700, NSF CNS-1213128, NSF CCF-1410009, AFOSR FA9550-12-1-0215 and NSF CPS-1446901.}
\thanks{The research of N. Michelusi  has been funded by NSF under grant CNS-1642982, and by DARPA under grant \#108818.}
\thanks{N. Michelusi is with the School of Electrical and Computer Engineering, Purdue University. email: michelus@purdue.edu.}
\thanks{M. Nokleby is with the Dept. of Electrical and Computer Engineering, Wayne State University. email:matthew.nokleby@wayne.edu}
\thanks{U. Mitra is with the Dept. of Electrical Engineering, University of Southern California. email: ubli@usc.edu.}
\thanks{R. Calderbank is with the Dept. of Electrical Engineering, Duke University. email: robert.calderbank@duke.edu.}
 \vspace{-7mm}
}
\begin{document}
\maketitle
\pagenumbering{gobble}

\begin{abstract}
In this paper, a multi-scale approach to spectrum sensing in cognitive cellular networks is proposed. 
In order to overcome the huge cost incurred in the acquisition of full network state information, 
a hierarchical scheme is proposed, based on which local state estimates are aggregated up the hierarchy to obtain
aggregate state information at multiple scales, which are then sent back to each cell for local decision making.
Thus, each cell obtains fine-grained estimates of the channel occupancies of nearby cells, but coarse-grained estimates of those of distant cells.
The performance of the aggregation scheme is studied in terms of the trade-off between the throughput achievable by secondary users and the interference
generated by the activity of these secondary users to primary users.
In order to account for the irregular structure of interference patterns arising from path loss, shadowing, and blockages, which are especially relevant in millimeter wave networks,
a greedy algorithm is proposed to find a multi-scale aggregation tree to optimize the performance.
It is shown numerically that  this tailored hierarchy outperforms a regular tree construction by  60\%.
\end{abstract}
\vspace{-5mm}

\section{Introduction}
 The recent proliferation of mobile devices has been exponential
in number as well as heterogeneity \cite{CISCO}.
This tremendous increase in demand of wireless services poses severe challenges due to
the finite bandwidth of current systems, and calls for new tools for the design and optimization of \emph{agile} wireless networks \cite{pcast}. 
Cognitive radio \cite{Mitola} has the potential to improve spectral efficiency, by enabling smart terminals (secondary users, SUs) to exploit resource gaps left by legacy, primary users (PUs)~\cite{Peha}.

In this paper, we consider a cognitive cellular network, which comprises a set of PUs, which are licensed to access the spectrum, and a set of SUs, which may access opportunistically any unoccupied spectrum. The network is arranged into cells. In each cell, PUs join and leave the channel at random times, thus the state of each cell is described by a first-order binary Markov process.
 In order to utilize the unoccupied spectrum, the SUs require accurate estimates of spectrum occupancies throughout the cellular network. In principle, 
 the channel occupancies can be sensed locally in each cell and collected at a fusion center; the global network state information collected at the fusion center is then broadcasted
 to each cell for
 local decision making. In practice, however, such centralized estimation can be extremely costly in terms of transmit energy and time. 
 
Due to path loss, shadowing, and blockage, SUs accessing the channel in one cell cause significant interference to nearby PUs, but negligible interference to distant PUs. Therefore, each SU needs precise information about the occupancies of nearby cells, but only coarse information about the occupancies of faraway cells.
Given this intuition, we construct a cellular hierarchy, which is used to \emph{aggregate} channel measurements over the network at multiple scales.
Thus, SUs operating in a given cell have precise knowledge about the local state, \emph{aggregate} knowledge of the states of the cells nearby,
\emph{aggregate and coarser} knowledge of the states of the cells farther away, and so on at multiple scales, reflecting the distance dependent nature of wireless interference.

This paper provides important extensions over \cite{MichelusiGCOM}, wherein we assumed a regular tree for hierarchical spectrum sensing by assuming that interference is regular and isotropic (matched to the hierarchy). Herein, we examine the irregular effects of shadowing and blockage, which are
especially severe at millimeter wave frequencies \cite{Singh,Singh2,Bai15}. As in \cite{MichelusiGCOM}, we tradeoff SU network throughput versus the interference generated by the SUs to the PUs. To overcome combinatorial complexity, we develop a greedy algorithm to determine the best hierarchical aggregation tree matched to the irregular interference patterns of millimeter wave communications. Optimality is defined in terms of the trade-off between SU network throughput and interference to PUs.  As expected, this tailored hierarchy outperforms the regular tree construction in \cite{MichelusiGCOM} by  60\%.  Our methods also apply to sub-6GHz wireless networks and are robust to issues of directionality of interferers and primary receivers.

 Hierarchical estimation was proposed in \cite{nokleby:JSTSP13} in the context of averaging consensus \cite{benezit:IT10}, which is a prototype for distributed, linear estimation schemes. Consensus-based schemes for spectrum estimation have also been proposed in~\cite{li:TVT10,zeng:JSTSP11}.
 In contrast, we focus on a \emph{dynamic} setting.
  A framework  for joint spectrum sensing and scheduling in wireless networks has been proposed in \cite{myTCNC},
 for the case of a single cell. Instead, we consider a network composed of multiple cells.
 
To summarize, the contributions of this paper are as follows.
1) We propose a hierarchical framework for aggregation of channel state information over a wireless network composed of multiple cells,
with a generic interference pattern among cells.
We study the performance of the aggregation scheme in terms of the trade-off between the throughput of SUs and the interference
generated by the activity of the SUs to the PUs.
2) We develop a closed form expression for the belief of the spectrum occupancy vector that shows that this belief is statistically independent across subsets of cells at different levels of the hierarchy,
and uniform within each subset (Theorem \ref{thm1}).  This results greatly facilitates the computation of the expected average long-term reward (Lemma \ref{lem1}); and 
3) we address the optimal design of the hierarchical aggregation tree so as to optimize performance, for a given
interference pattern; due to the combinatorial complexity of this problem, we propose a greedy algorithm based on 
agglomerative clustering \cite[Ch. 14]{friedman:01} (Algorithm \ref{alg:clustering}).

This paper is organized as follows. 
In Sec. \ref{sysmo}, we present the system model.
In Sec. \ref{analysis}, we present the performance analysis, for a given tree and interference pattern.
In Sec. \ref{treedesign}, we address the tree design.
In Sec. \ref{numres}, we present numerical results and, in Sec. \ref{conclu}, we conclude this paper.

\begin{figure}[t]
\centering  
\includegraphics[width=.8\linewidth,trim = 0mm 0mm 0mm 0mm,clip=true]{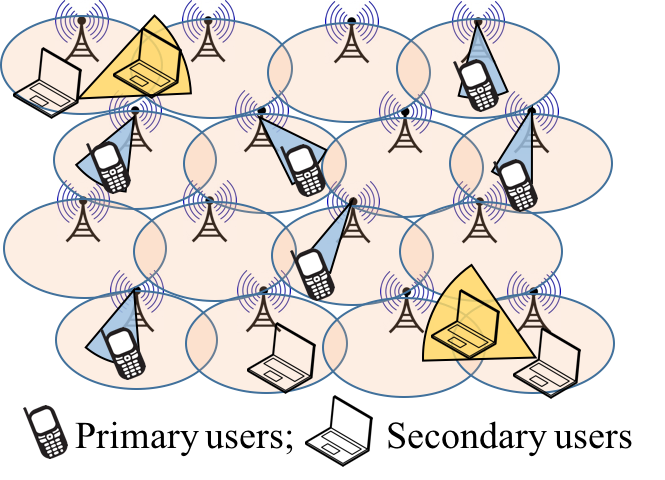}
\caption{System model.}
\label{fig:sysmo}
\vspace{-5mm}
\end{figure}

\vspace{-5mm}
\section{System model}
 \label{sysmo}
We consider a cognitive network, depicted in Fig. \ref{fig:sysmo}, composed of a primary cellular network
with $N_C$ cells, and an opportunistic network of SUs.
Cells are indexed as $1,2,\dots, N_C$.
We denote the set of cells as $\mathcal C\equiv\{1,2,\dots, N_C\}$.
The SUs opportunistically access the spectrum so as to maximize their own throughput, under a constraint on the interference caused to the cellular network.

Let $b_{i,t}\in \{0,1\}$ be the PU spectrum occupancy of cell $i\in\mathcal C$ in slot $t$. That is, $b_{i,t}=1$
if the channel is occupied by PUs in cell $i$ at time $t$, and $b_{i,t}=0$ if it is idle.
We suppose that $\{b_{i,t},t\geq 0,i\in\mathcal C\}$ 
are i.i.d. across cells, and evolve according to a two-state Markov chain, as a result of PUs joining and leaving the network at random times.
We let 
\begin{align}
\label{txprob}
&p\triangleq\mathbb P(b_{i,t+1}=1|b_{i,t}=0),
\
q\triangleq\mathbb P(b_{i,t+1}=0|b_{i,t}=1),
\end{align}
be the transition probability of the Markov chain from "0" to "1" and from "1" to "0", respectively.
Therefore, the steady-state probability that $b_{i,t}$ is occupied is given by
\begin{align}
\pi_B\triangleq\frac{p}{p+q}.
\end{align}
We denote the state of the network in slot $t$ as $\mathbf b_t=(b_{1,t},b_{2,t},\dots,b_{N_C,t})$.

The activity of the SUs is represented by the \emph{SU access decision} $a_{i,t}\in\{0,1\}$, in cell $i$, slot $t$,
where $a_{i,t}=1$ if the SUs operating in cell $i$ access the channel at time $t$, and 
$a_{i,t}=0$ otherwise.
We denote the network-wide SU access decision as $\mathbf a_t=(a_{1,t},a_{2,t},\dots,a_{N_C,t})$ in slot $t$.
The activity of the SUs generate interference to the cellular network.
We denote the interference strength between cells $i$ and $j$ as
$\phi_{i,j}\geq 0$. We assume that interference is symmetric, so that $\phi_{i,j}=\phi_{j,i},\forall i,j\in\mathcal C$.
Note that $\phi_{i,i}$ is the  strength of the interference caused by the SUs in cell $i$ to cell $i$.
We let $\boldsymbol{\Phi}$ be the symmetric interference matrix,
with components $[\boldsymbol{\Phi}]_{i,j}=\phi_{i,j},\forall  i,j\in\mathcal C$.

Given the network state $\mathbf b_t\in\{0,1\}^{N_C}$ and the SU access decision $a_{i,t}\in\{0,1\}$,
we define the local reward for the SUs in cell $i$ as
\begin{align}
\label{locrew}
\!\!\!\!\!r_{i}(a_{i,t},\mathbf b_t) {=} a_{i,t}\left[\rho_I (1{-} b_{i,t}) + \rho_Bb_{i,t}
-\lambda\sum_{j=1}^{N_C} \phi_{i,j}b_{j,t}\right].
\end{align}
The term $a_{i,t}(1- b_{i,t})$ in (\ref{locrew}) 
equals one if and only if the SUs in cell $i$ access the channel when cell $i$ is idle;
$\rho_I\geq 0$ is the instantaneous expected SU throughput accrued in this case.
The term $a_{i,t}b_{i,t}$ in (\ref{locrew}) 
equals one if and only if the SUs in cell $i$ access the channel when cell $i$ is occupied;
$\rho_B$ is the instantaneous expected SU throughput accrued in this case, with $0\leq\rho_B\leq\rho_I$.
Finally, the term
\begin{align*}
a_{i,t}\sum_{j=1}^{N_C} \phi_{i,j}b_{j,t}
\end{align*}
represents the overall interference generated by the
SUs in cell $i$ to the rest of the network, cell $i$ included.
The term $\lambda>0$ is a Lagrangian multiplier which captures the trade-off between
the reward for the SU system and the interference generated to the PUs.

The network reward is defined as the aggregate reward over the entire network,
as a function of the SU access decision $\mathbf a_t$ and network state $\mathbf b_t$,
\begin{align}
R(\mathbf a_t,\mathbf b_t)=\sum_{i\in\mathcal C}r_{i}(a_{i,t},\mathbf b_t).
\end{align}

The SU access decision in cell $i$ is decided based on partial network state information,
denoted as $\pi_{i,t}$ at time $t$, where $\pi_{i,t}(\mathbf b)$ is the belief that 
the network state takes value $\mathbf b_t$ in slot $t$, available to SUs in cell $i$. Given $\pi_{i,t}$,
the SUs in cell $i$ choose $a_{i,t}\in\{0,1\}$ so as to maximize the expected reward $r_{i}(a_{i,t},\pi_{i,t})$,
given by
\begin{align}
\label{expreward}
r_{i}(a_{i,t},\pi_{i,t})\triangleq\sum_{\mathbf b\in\{0,1\}^{N_C}}\pi_{i,t}(\mathbf b)r_{i}(a_{i,t},\mathbf b).
\end{align}
Thus,
\begin{align}
\label{suopt}
a_{i,t}^*=\arg\max_{a\in\{0,1\}}r_{i}(a,\pi_{i,t}),
\end{align}
yielding the optimal expected local reward
\begin{align}
r_{i}^*(\pi_{i,t})&=\max\{r_{i}(0,\pi_{i,t}),r_{i}(1,\pi_{i,t})\}\nonumber
=\max\{0,r_{i}(1,\pi_{i,t})\},
\end{align}
where $r_{i}(0,\pi_{i,t})=0$ from (\ref{locrew}).

Given the belief $\boldsymbol{\pi}_t=(\pi_{1,t},\pi_{2,t},\dots,\pi_{N_C,t})$ across the network,
under the optimal SU access decisions $\mathbf a_{t}^*$ given by (\ref{suopt}), the optimal network reward is thus given by
\begin{align}
\label{netrew}
R^*(\boldsymbol{\pi}_t)=\sum_{i\in\mathcal C}r_{i}^*(\pi_{i,t}).
\end{align}

Using the fact that $r_{i}(a,\mathbf b)\leq \max\{r_{i}(0,\mathbf b),r_{i}(1,\mathbf b)\},\forall i\in\mathcal C,\forall a\in\{0,1\},\forall \mathbf b\in\{0,1\}^{N_C}$,
we obtain the inequality 
\begin{align}
\label{upbound}
R^*(\boldsymbol{\pi}_t)
&\leq
\sum_{\mathbf b\in\{0,1\}^{N_C}}\pi_{i,t}(\mathbf b)\sum_{i\in\mathcal C}\max\{0,r_{i}(1,\mathbf b)\},
\end{align}
\emph{i.e.}, the expected network reward under partial network state information is upper bounded
by the expected network reward obtained when full network state information is provided to the SUs in each cell (perfect knowledge of $\mathbf b_t$).
Thus, the SUs should, possibly, obtain full network state information in order to achieve the best performance.

The belief $\pi_{i,t}$ is computed based on spectrum measurements performed over the network.
Ideally, in order to achieve global network state information and maximize the reward (see (\ref{upbound})),
the SUs in cell $i$ should obtain the local spectrum state $b_{i,t}$, as well as the spectrum state from the rest of the network.
To this end, the SUs in cell $j{\neq}i$ should report the local spectrum state $b_{j,t}$ to the SUs in cell $i$ via information exchange, potentially over multiple hops, for transmitters/receivers far away from each other.
Since this needs to be done over the entire network (\emph{i.e.}, for every pair $(i,j)\in\mathcal C^2$), the associated cost of information exchange may be huge, especially in large networks
composed of a large number of small cells. In order to reduce the cost of acquisition of network state information, in this paper we propose a  
{\em multi-scale} approach to spectrum sensing. To this end, we structure the cellular grid in
a hierarchical structure, defined by a tree of depth $D\geq 1$.
\vspace{-5mm}
\subsection{Tree construction}
Herein, we describe the tree construction.
At level $0$, we have the leaves, represented by the cells $\mathcal C$.
We let $\mathcal C_{0}^{(i)}\equiv\{i\}$ for $i\in\mathcal C$.
At level $1$, let $\mathcal C_{1}^{(k)},k=1,2,\dots,n_{1}$ be a partition of the cells into $n_{1}$ subsets, where $1\leq n_{1}\leq |\mathcal C|$.
We associate a cluster head to each subset $\mathcal C_{1}^{(k)}$;
the set of $n_{1}$ level-$1$ cluster-heads is denoted as $\mathcal H_{1}$.
Hence, $\mathcal C_{1}^{(k)}$  is the set of cells associated to
the level-1 cluster head $k\in\mathcal H_{1}$.

Recursively, at level $L$, let $\mathcal H_{L}$ be the set of nodes defining the level $L$-cluster heads,
 with $L\geq 1$. If  $|\mathcal H_{L}|=1$, then 
 we have defined a tree with depth $D=L$.
 Otherwise, we define a partition of $\mathcal H_{L}$ into
$n_{L+1}$ subsets $\mathcal H_{L}^{(k)},k=1,2,\dots,n_{L+1}$,
where $1\leq n_{L+1}\leq |\mathcal H_{L}|$,
and we associate to each subset a level-$(L+1)$ cluster head;
the set of $n_{L+1}$ level-$(L+1)$ cluster-heads is denoted as $\mathcal H_{L+1}$.
Let $\mathcal C_{L+1}^{(k)},k=1,2,\dots,n_{L+1}$ be the set of cells associated to level-$(L+1)$ cluster head $k\in\mathcal H_{L+1}$.
This is obtained recursively as
\begin{align}
\mathcal C_{L+1}^{(k)}=\bigcup_{m\in\mathcal H_{L}^{(k)}}\mathcal C_{L}^{(m)}.
\end{align}

Let $P_L(i)\in\mathcal H_L$ be the level $L$ parent of cell $i$,
\emph{i.e.}, $P_0(i)=i$, and $P_L(i)=k$ for $L\geq 1$ if and only if $i\in\mathcal C_{L}^{(k)}$, for some $k\in\mathcal H_L$.
We make the following definition.
\begin{definition}We define
the \emph{hierarchical distance} between cells $i\in\mathcal C$ and $j\in\mathcal C$ as
\begin{equation*}
\Lambda(i,j){=}\min\left\{L\geq 0:P_L(i)=P_L(j)\right\}.
\end{equation*}
\end{definition}
In other words, $\Lambda(i,j)$ is the lowest level $L$ such that cells $i$ and $j$ belong to the same cluster at level $L$. 
It follows that $\Lambda(i,i)=0$ and $\Lambda(i,j)=\Lambda(j,i)$, \emph{i.e.}, the hierarchical distance between cell $i$ and itself is $0$,
and it is symmetric.

We let $\mathcal C_{\Lambda}^{(i)}(L)$ be the set of cells at hierarchical distance $L$  from cell $i$.
That is, $\mathcal C_{\Lambda}^{(i)}(0)\equiv\{i\}$, and, for $L>0$,
\begin{align}
\mathcal C_{\Lambda}^{(i)}(L)\equiv\mathcal C_{L}^{(P_L(i))}\setminus\mathcal C_{L-1}^{(P_{L-1}(i))}.
\end{align}
In fact, by the tree construction, $\mathcal C_{L}^{(P_L(i))}$ contains all cells with hierarchical distance (from cell $i$) less (or equal) than $L$.
Thus, $\mathcal C_{\Lambda}^{(i)}(L)$ is obtained by removing from $\mathcal C_{L}^{(P_L(i))}$ all cells  with hierarchical distance less (or equal) than $L-1$, $\mathcal C_{L-1}^{(P_{L-1}(i))}$.

\vspace{-5mm}
\subsection{Hierarchical information exchange over the tree}
In order to collect network state information at multiple scales, the SUs exchange local information
over the tree.
In particular, we propose a scheme in which the SUs  carry out a {\em hierarchical} fusion of local estimates. This fusion is patterned after {\em hierarchical averaging}, a technique for scalar average consensus in wireless networks developed in \cite{nokleby:JSTSP13}.

At the beginning of slot $t$,
at the cell level (level-$0$), the local SUs perform spectrum sensing to estimate the local state.
Thus, the SUs in cell $i$ estimate the local state $b_{i,t}$ as $\hat b_{i,t}^{(i)}\in[0,1]$, representing the belief that the local state takes the value $b_{i,t}=1$, as seen by the SUs operating in cell $i$ (superscript $(i)$).
For simplicity, in this paper we assume that local spectrum sensing is done with no errors, so that
 \begin{align}
 \label{localestimates}
\hat b_{i,t}^{(i)}=b_{i,t},\ \forall i\in\mathcal C.
 \end{align}

Next, these observations are fused up the hierarchy. 
The level $1$ cluster head $m\in\mathcal H_1$ receives the spectrum measurements from
its cluster $\mathcal C_1^{(m)}$, and fuses them as
\begin{align}
S_{m,t}^{(1)}
=\sum_{j\in\mathcal C_1^{(m)}} b_{j,t},\ \forall m\in\mathcal H_1.
\end{align}
This process continues up the hierarchy: the level $L$ cluster head 
$m\in\mathcal H_L$ receives the aggregate spectrum measurements $S_{k,t}^{(L-1)}$ from
the level-$(L-1)$ cluster heads $k\in\mathcal H_{L-1}^{(m)}$ connected to it, and fuses them as 
\begin{align}
S_{m,t}^{(L)}
=\sum_{k\in\mathcal H_{L-1}^{(m)}} S_{k,t}^{(L-1)}
=\sum_{j\in\mathcal C_L^{(m)}} b_{j,t},\ \forall m\in\mathcal H_L.
\end{align}

Eventually, the aggregate spectrum measurements are fused at the unique root of the tree (level $D$) as
\begin{align}
S_{1,t}^{(D)}
=\sum_{k\in\mathcal H_{D-1}^{(1)}} S_{k,t}^{(D-1)}
=\sum_{j\in\mathcal C} b_{j,t},
\end{align}
where we have used the fact that $\mathcal C_D^{(1)}\equiv\mathcal C$.

Upon reaching level $D$,
the appropriate aggregate spectrum measurements are propagated down to the individual cells $i\in\mathcal C$, following the tree.
Thus, at the beginning of slot $t$, the SUs operating in cell $i$ receive 
\begin{align*}
\left\{
\begin{array}{rcl}
S_{P_0(i),t}^{(0)}&=&\sum_{j\in\mathcal C_0^{(P_0(i))}} b_{j,t}=b_{i,t},\\
S_{P_L(i),t}^{(L)}&=&\sum_{j\in\mathcal C_L^{(P_L(i))}} b_{j,t},\ 1\leq L<D,\\
S_{1,t}^{(D)}&=&\sum_{j\in\mathcal C}b_{j,t},
\end{array}\right.
\end{align*}
where we remind that $P_L(i)$ is the level-$L$ parent of cell $i$, and $\mathcal C_L^{(P_L(i))}$ is the set of cells associated to $P_L(i)$.
That is, the SUs operating in cell $i$ receive from the level-$L$ parent the aggregate spectrum measurements over $\mathcal C_L^{(P_L(i))}$.
From this set of measurements, one can compute
\begin{align}
\label{sigmadef}
\left\{
\begin{array}{lcl}
\sigma_{i,t}^{(0)}&\triangleq&b_{i,t},\\
\sigma_{i,t}^{(L)}&\triangleq& S_{P_L(i),t}^{(L)}-S_{P_{L-1}(i),t}^{(L-1)},\ 1\leq L\leq D.
\end{array}\right.
\end{align}
Note that $\sigma_{i,t}^{(L)}$ is the aggregate spectrum measurement of the cells at hierarchical distance $L$ from cell $i$,
\begin{align}
\label{sigmadef2}
\sigma_{i,t}^{(L)}\triangleq\sum_{j\in\mathcal C_{\Lambda}^{(i)}(L)}b_{j,t},\ \forall\ 0\leq L\leq D.
\end{align}
Thus, the SUs in  cell $i$ receive the set of aggregate spectrum measurements at
multiple scales corresponding to
 different hierarchical distances.
Importantly, they know only the aggregate spectrum measurements, but not the specific values of $b_{j,t},\forall j\neq i$.
These aggregate spectrum measurements are used to update the belief $\pi_{i,t}$ in the next section.
\vspace{-4mm}
\section{Analysis}
 \label{analysis}
 The SUs in cell $i$ update the belief $\pi_{i,t}$ based on past and present spectrum measurements
$\boldsymbol{\sigma}_{i,\tau}=(\sigma_{i,\tau}^{(0)},\sigma_{i,\tau}^{(1)},\dots,\sigma_{i,\tau}^{(D)}),\forall 0\leq\tau\leq t$.
The form of $\pi_{i,t}$ is provided by the following Theorem.
\begin{thm}
\label{thm1} 
Given $\boldsymbol{\sigma}_{i,t}=(o_0,o_1,\dots,o_D)$,
\begin{align}
\label{eq1}
\pi_{i,t}(\mathbf b)=\prod_{l=0}^D\mathbb P(b_{j,t}=b_j,\forall j\in\mathcal C_{\Lambda}^{(i)}(L)|\sigma_{i,t}^{(L)}=o_L),
\end{align}
independent of $\boldsymbol{\sigma}_{i,\tau},\forall \tau<t$,
where 
\begin{align}
\label{eq2}
&\mathbb P(b_{j,t}=b_j,\forall j\in\mathcal C_{\Lambda}^{(i)}(L)|\sigma_{i,t}^{(L)}=o_L)
\\&
=
\chi\left(\sum_{j\in\mathcal C_{\Lambda}^{(i)}(L)}b_j=o_L\right)
\frac{o_L!\left(|\mathcal C_{\Lambda}^{(i)}(L)|-o_L\right)!}{|\mathcal C_{\Lambda}^{(i)}(L)|!},\nonumber
\end{align}
where $\chi(\cdot)$ is the indicator function. \hfill\qed
\end{thm}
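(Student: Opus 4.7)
My plan is to establish the result in two complementary stages: a factorization across hierarchical levels that rests on the cell-wise independence of the primary Markov chains, and, within each level, an exchangeability argument that simultaneously yields uniformity and kills the dependence on past observations.

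\emph{Step 1: factorization across levels.} Since the occupancies $\{b_{j,\tau}\}$ evolve as independent Markov chains across $j\in\mathcal C$, the joint law of the entire history $\{b_{j,\tau}:j\in\mathcal C,\tau\leq t\}$ factorizes over $j$. By construction the sets $\mathcal C_{\Lambda}^{(i)}(L)$, $L=0,\dots,D$, partition $\mathcal C$, and $\sigma_{i,\tau}^{(L)}$ depends only on the cells in $\mathcal C_{\Lambda}^{(i)}(L)$ (see (\ref{sigmadef2})). Grouping cells by hierarchical distance $L$, the posterior of $\mathbf b_t$ given $\boldsymbol{\sigma}_{i,1},\dots,\boldsymbol{\sigma}_{i,t}$ therefore decomposes as a product over $L$ of the marginal posterior of $(b_{j,t})_{j\in\mathcal C_{\Lambda}^{(i)}(L)}$ conditional on the level-$L$ measurement sequence $\{\sigma_{i,\tau}^{(L)}:\tau\leq t\}$. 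This reduces (\ref{eq1}) to a per-level computation.

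\emph{Step 2: within a level, exchangeability implies uniformity.} Fix $L$ and write $S_L\equiv\mathcal C_{\Lambda}^{(i)}(L)$. The prior on $\{b_{j,\tau}\}_{j\in S_L,\tau\leq t}$ is invariant under permutations of $S_L$ (the cells are i.i.d.\ Markov chains), and each observation $\sigma_{i,\tau}^{(L)}=\sum_{j\in S_L}b_{j,\tau}$ is a permutation-symmetric function of these cells. Hence the posterior of $(b_{j,t})_{j\in S_L}$ given the entire level-$L$ history is exchangeable on $S_L$. Any two binary vectors on $S_L$ with the same sum are related by a permutation of $S_L$, so exchangeability forces the posterior to assign them equal mass. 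Conditioning on $\sigma_{i,t}^{(L)}=o_L$ therefore produces the uniform distribution over the $\binom{|S_L|}{o_L}$ configurations with sum $o_L$, independent of $\sigma_{i,\tau}^{(L)}$ for $\tau<t$; rewriting $\binom{|S_L|}{o_L}^{-1}=o_L!(|S_L|-o_L)!/|S_L|!$ yields exactly (\ref{eq2}). Plugging this into the product of Step 1 delivers (\ref{eq1}).

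\emph{Main obstacle.} The delicate point is the claim that the current sum alone is a sufficient statistic, absorbing all past measurements. This is what I would verify most carefully: every past observation enters the likelihood only through a permutation-symmetric statistic of $\{b_{j,\tau}\}_{j\in S_L}$, which keeps the posterior exchangeable; a Bayes-rule manipulation then shows that an exchangeable distribution further conditioned on the current sum $o_L$ is uniform on the corresponding level set, so all past information collapses into a normalization constant and drops out of the final expression.
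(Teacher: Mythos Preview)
Your argument is correct and reaches the same conclusion as the paper, but it is organized differently and is, in fact, more economical. The paper proves Theorem~\ref{thm1} by induction on $t$: it first establishes (\ref{eq1})--(\ref{eq2}) at $t=0$ via Bayes' rule and the i.i.d.\ structure, and then, assuming the result at $t-1$, pushes the belief forward through the Markov transition, re-applies Bayes' rule with the new observation $\boldsymbol{\sigma}_{i,t}$, and uses an explicit permutation argument to show that the numerator is constant over all $\mathbf b$ with the prescribed level sums; this collapses the expression back to the uniform law on each level set and shows that $\boldsymbol{\sigma}_{i,t-1}$ drops out. Your route replaces this induction by a single structural observation: the full path law is exchangeable within each $S_L$, all past and present level-$L$ observations are symmetric statistics, hence the posterior on $(b_{j,t})_{j\in S_L}$ is exchangeable; since the current sum is observed exactly, exchangeability forces uniformity on the level set and the past observations can affect only an overall constant that cancels in Bayes' rule. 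What the paper's approach buys is a completely explicit verification that the inductive step closes (useful if one later perturbs the model, e.g., non-identical chains or noisy local sensing, where exchangeability may fail). What your approach buys is brevity and a clear identification of the mechanism---permutation invariance of prior and likelihood---that makes the current sum a sufficient statistic, without any recursive bookkeeping.
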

\begin{proof}
 \iftoggle{Arxiv}{%
See the Appendix.
}{
Due to space constraints, the proof is in \cite{Journ_version}.
}
\end{proof}

From Equation (\ref{eq1}), it follows that the belief $\pi_{i,t}$ is statistically independent 
across the subsets of cells at different hierarchical distances from cell $i$; this result follows from the fact that
 $\{b_{i,t},t\geq 0,i\in\mathcal C\}$ are i.i.d. across cells.
 Additionally, since $\sum_{j\in\mathcal C_{\Lambda}^{(i)}(L)}b_{j,t}=o_L$
 (as a result of state aggregation at hierarchical distance $L$)
  and $b_{j,t}\in\{0,1\}$,
 there are $\left(\begin{array}{c}|\mathcal C_{\Lambda}^{(i)}(L)|\\o_L\end{array}\right)$
 possible combinations of $\{b_{j,t},j\in\mathcal C_{\Lambda}^{(i)}(L)\}$;
 equation (\ref{eq2}) states that these combinations are uniformly distributed,
as a result of the i.i.d. assumption.  

Importantly, $\pi_{i,t}$ is independent of past measurements but solely depends on the 
current one $\boldsymbol{\sigma}_{i,t}$. In fact, spectrum occupancies $b_{j,t}$ are 
identically distributed across cells.
 
 We can use Theorem \ref{thm1} to compute the expected reward in cell $i$, given by (\ref{expreward}).
Using (\ref{locrew}), we obtain
\begin{align}
\label{rewx}
&r_{i}(a_{i,t},\pi_{i,t})=
\rho_I a_{i,t}(1- \mathbb P(b_{i,t}=1|\pi_{i,t})) 
\nonumber\\&
\nonumber
\quad+ \rho_Ba_{i,t}\mathbb P(b_{i,t}=1|\pi_{i,t})
-\lambda a_{i,t}\sum_{j=1}^{N_C} \phi_{i,j}\mathbb P(b_{j,t}=1|\pi_{i,t})
\\&
=
\rho_I a_{i,t}(1- \mathbb P(b_{i,t}=1|\pi_{i,t})) + \rho_Ba_{i,t}\mathbb P(b_{i,t}=1|\pi_{i,t})
\nonumber
\\&
\quad-\lambda a_{i,t}\sum_{L=0}^D\sum_{j\in\mathcal C_{\Lambda}^{(i)}(L)} \phi_{i,j}\mathbb P(b_{j,t}=1|\pi_{i,t}).
\end{align}
In the last step above, we have partitioned the set of cells $\mathcal C$ into the 
subsets corresponding to hierarchical distances $L=0,1,\dots,D$ from cell $i$.
Now, using (\ref{eq2}) in Theorem \ref{thm1}, we obtain, for all $0\leq L\leq D$,
for all $\forall j\in\mathcal C_{\Lambda}^{(i)}(L)$,
\begin{align}
&\mathbb P(b_{j,t}=1|\pi_{i,t})=0,\ \text{if }o_L=0,
\\
&\mathbb P(b_{j,t}=1|\pi_{i,t})=
\frac{o_L!\left(|\mathcal C_{\Lambda}^{(i)}(L)|-o_L\right)!}{|\mathcal C_{\Lambda}^{(i)}(L)|!}
\left(\begin{array}{c}|\mathcal C_{\Lambda}^{(i)}(L)|-1\\o_L-1\end{array}\right),
\nonumber
\\&\qquad\qquad\qquad\qquad\qquad\qquad
\text{if }o_L>0,
\end{align}
since there are $\left(\begin{array}{c}|\mathcal C_{\Lambda}^{(i)}(L)|-1\\o_L-1\end{array}\right)$
combinations such that $b_{j,t}=1$, given that $\sigma_{i,t}^{(L)}=o_L$.
Solving, we obtain
\begin{align}
\mathbb P(b_{j,t}=1|\pi_{i,t})=\frac{o_L}{|\mathcal C_{\Lambda}^{(i)}(L)|}.
\end{align}
Thus, substituting in (\ref{rewx}), and letting
\begin{align}
\Phi_i(L)\triangleq\sum_{j\in\mathcal C_{\Lambda}^{(i)}(L)} \phi_{i,j}
\end{align}
be the total interference generated by the SUs in cell $i$ to the cells at hierarchical distance $L$ from cell $i$,
we can  finally rewrite
\begin{align}
\nonumber
r_{i}(a_{i,t},\boldsymbol{\sigma}_{i,t})=&
\rho_I a_{i,t}(1- \sigma_{i,t}^{(0)}) + \rho_Ba_{i,t}\sigma_{i,t}^{(0)}
\\&
-\lambda a_{i,t}\sum_{L=1}^D\frac{\sigma_{i,t}^{(L)}}{|\mathcal C_{\Lambda}^{(i)}(L)|}
\Phi_i(L),
\end{align}
where, for convenience, we have expressed the dependence of $r_{i}(\cdot)$ on 
$\boldsymbol{\sigma}_{i,t}$, rather than on $\pi_{i,t}$.
Thus, the network reward (\ref{netrew}) is given by
\begin{align}
R^*(\boldsymbol{\Sigma}_t)=\sum_{i\in\mathcal C}\max\left\{0,r_{i}(1,\boldsymbol{\sigma}_{i,t})\right\},
\end{align}
where we have defined $\boldsymbol{\Sigma}_t=[\boldsymbol{\sigma}_{1,t},\boldsymbol{\sigma}_{2,t},\dots,\boldsymbol{\sigma}_{N_C,t}]$,
and, for convenience, we have expressed the dependence of $R^*(\cdot)$ on 
$\boldsymbol{\Sigma}_{t}$, rather than on $\boldsymbol{\pi}_t$.

\subsection{Average long-term performance evaluation}
We are interested in evaluating the average long-term performance of the hierarchical estimation scheme, that is
\begin{align}
\bar R=\lim_{T\to\infty}\frac{1}{T}\mathbb E\left[\sum_{t=0}^{T-1}R^*(\boldsymbol{\Sigma}_t)\right],
\end{align}
where the expectation is computed with respect to the sequence $\{\boldsymbol{\Sigma}_t,t\geq 0\}$.
We have the following result.
\begin{lemma}
\label{lem1}
The average long-term network reward is given by
\begin{align}
&\bar R=
\sum_{i\in\mathcal C}\sum_{o_0\in\{0,1\}}
\mathcal B(o_0;1)
\sum_{o_1=0}^{|\mathcal C_{\Lambda}^{(i)}(1)|}\mathcal B\left(o_1;|\mathcal C_{\Lambda}^{(i)}(1)|\right)\dots \\&
\dots\sum_{o_L=0}^{|\mathcal C_{\Lambda}^{(i)}(L)|}\mathcal B\left(o_L;|\mathcal C_{\Lambda}^{(i)}(L)|\right)
\dots\sum_{o_D=0}^{|\mathcal C_{\Lambda}^{(i)}(D)|}
\mathcal B\left(o_D;|\mathcal C_{\Lambda}^{(i)}(D)|\right)
\nonumber
\\&
\times\max\left\{0,\rho_I(1- o_0) + \rho_Bo_0-\lambda\sum_{L=1}^D\frac{o_L}{|\mathcal C_{\Lambda}^{(i)}(L)|}\Phi_i(L)
\right\},
\nonumber
\end{align}
where $\mathcal B\left(\cdot;|\mathcal C_{\Lambda}^{(i)}(L)|\right)$ is the binomial distribution  with $|\mathcal C_{\Lambda}^{(i)}(L)|$ trials and occupancy probability $\pi_B$,
\begin{align}
\mathcal B\left(o_L;|\mathcal C_{\Lambda}^{(i)}(L)|\right)
{=}
\left(\!\!\!\begin{array}{c}|\mathcal C_{\Lambda}^{(i)}(L)|\\o_L\end{array}\!\!\!\right)
\pi_B^{o_L}(1-\pi_B)^{|\mathcal C_{\Lambda}^{(i)}(L)|-o_L}.\qed
\nonumber
\end{align}
\end{lemma}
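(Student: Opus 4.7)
\medskip
\noindent\textbf{Proof plan for Lemma \ref{lem1}.} The plan is to reduce the long-run time average to a steady-state expectation, decouple across cells via linearity, and then exploit independence across hierarchical levels to recognize the nested binomial sum.

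First I would note that $\{b_{j,t}\}$ is an irreducible, aperiodic two-state Markov chain (assuming $p,q\in(0,1)$) with stationary distribution $\mathrm{Bernoulli}(\pi_B)$, and chains are independent across cells $j\in\mathcal C$. Consequently the joint process $\{\mathbf b_t\}$ is ergodic, and so is the derived process $\{\boldsymbol{\Sigma}_t\}$ since each $\boldsymbol{\sigma}_{i,t}$ is a deterministic function of $\mathbf b_t$ via (\ref{sigmadef2}). The bounded-reward ergodic theorem then gives
\begin{align*}
\bar R \;=\; \mathbb E_\infty\!\left[R^*(\boldsymbol{\Sigma}_t)\right]
\;=\; \sum_{i\in\mathcal C}\mathbb E_\infty\!\left[\max\{0,r_{i}(1,\boldsymbol{\sigma}_{i,t})\}\right],
\end{align*}
where $\mathbb E_\infty$ denotes expectation under the stationary law $b_{j,t}\sim\mathrm{Bernoulli}(\pi_B)$ i.i.d.\ across $j\in\mathcal C$.

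Next I would fix $i\in\mathcal C$ and analyze the law of $\boldsymbol{\sigma}_{i,t}=(\sigma_{i,t}^{(0)},\ldots,\sigma_{i,t}^{(D)})$ in steady state. By the tree construction, the sets $\{\mathcal C_{\Lambda}^{(i)}(L):0\le L\le D\}$ form a partition of $\mathcal C$ (each cell $j$ has a unique hierarchical distance from $i$). Combined with the cross-cell independence of $\{b_{j,t}\}$, this makes $\sigma_{i,t}^{(0)},\sigma_{i,t}^{(1)},\ldots,\sigma_{i,t}^{(D)}$ \emph{mutually independent}, with $\sigma_{i,t}^{(L)}\sim\mathcal B(\,\cdot\,;|\mathcal C_{\Lambda}^{(i)}(L)|)$ since it is a sum of $|\mathcal C_{\Lambda}^{(i)}(L)|$ i.i.d.\ $\mathrm{Bernoulli}(\pi_B)$ variables. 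In particular $|\mathcal C_{\Lambda}^{(i)}(0)|=1$, so $\sigma_{i,t}^{(0)}\in\{0,1\}$ with $\mathbb P(\sigma_{i,t}^{(0)}{=}1)=\pi_B$, which matches the factor $\mathcal B(o_0;1)$ in the statement.

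Finally I would unfold the expectation as an iterated sum over the independent marginals:
\begin{align*}
\mathbb E_\infty\!\left[\max\{0,r_{i}(1,\boldsymbol{\sigma}_{i,t})\}\right]
=\sum_{o_0,\ldots,o_D}\Bigl(\prod_{L=0}^D\mathcal B(o_L;|\mathcal C_{\Lambda}^{(i)}(L)|)\Bigr)\,\max\{0,r_i(1,(o_0,\ldots,o_D))\},
\end{align*}
and substitute the closed-form expression for $r_{i}(1,\boldsymbol{\sigma}_{i,t})$ already derived just before the lemma. Summing over $i\in\mathcal C$ yields the displayed formula.

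The only genuinely delicate step is the independence claim for the $\sigma_{i,t}^{(L)}$, which rests squarely on the fact that the $\mathcal C_{\Lambda}^{(i)}(L)$ partition $\mathcal C$; everything else is an invocation of stationarity and a routine unfolding of expectation over a product distribution. I do not anticipate any obstacle beyond carefully stating this partition property from the tree construction.
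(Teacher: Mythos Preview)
Your proposal is correct and follows essentially the same approach as the paper, which justifies the lemma only informally by noting that at steady state the number of occupied cells in any subset $\tilde{\mathcal C}\subseteq\mathcal C$ is binomial with $|\tilde{\mathcal C}|$ trials and success probability $\pi_B$. Your version is more careful in invoking ergodicity to pass from the Ces\`aro average to the stationary expectation and in making explicit the partition property of $\{\mathcal C_{\Lambda}^{(i)}(L)\}_{L=0}^{D}$ that underlies the mutual independence of the $\sigma_{i,t}^{(L)}$, but the underlying idea is the same.
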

In fact, since channel occupancy states are i.i.d. across cells, at steady-state, the number of cells 
occupied
within any subset $\tilde{\mathcal C}\subseteq \mathcal C$ is a binomial random variable
with $|\tilde{\mathcal C}|$ trials (the number of cells in the set)
and  occupancy probability  $\pi_B$ (the steady-state probability that one cell is occupied).
The result then follows by applying this argument to the hierarchical aggregation scheme.

Note that $\bar R$ depends on the structure of the tree employed for network state information exchange.
In the next section, we present an algorithm to design the tree so as to maximize the network reward $\bar R$.

Using (\ref{upbound}), we  can compare the network reward $\bar R$ with the upper bound computed under the assumption of
full network state information at each cell, given by
\begin{align}
\label{upboundavg}
\nonumber
&\bar R_{\mathrm{up}}=
\sum_{i\in\mathcal C}
\sum_{\mathbf b\in\{0,1\}^{N_C}}
\pi_B^{\sum_i b_i}(1-\pi_B)^{N_C-\sum_i b_i}\
\\&
\times
\max\left\{0,\rho_I(1- b_{i}) + \rho_Bb_{i}-\lambda\sum_{j=1}^{N_C} \phi_{i,j}b_{j}\right\}.
\end{align}

\section{Tree Design}
 \label{treedesign}
The reward of the network depends crucially on the tree employed for information exchange. Optimizing the network reward over the set of all possible trees is a combinatorial problem with high complexity. Instead, we employ methods from hierarchical clustering to build a tree.  Hierarchical clustering is well-studied (see, e.g. \cite[Ch. 14]{friedman:01}), with two main approaches: {\em divisive} clustering, in which a tree is built by successively splitting larger clusters;
and {\em agglomerative} clustering, in which a tree is built by successively combining smaller clusters.
Our algorithm is based on the latter.

\begin{algorithm}
\SetKwData{Left}{left}\SetKwData{This}{this}\SetKwData{Up}{up}
\SetKwFunction{Union}{Union}\SetKwFunction{FindCompress}{FindCompress}
\SetKwInOut{Input}{input}\SetKwInOut{Output}{output}
\Input{Cells $\mathcal{C}$, interference matrix $\boldsymbol{\Phi}$
}
\Output{A hierarchy of clusters $\mathcal{C}_L^{(k)}$, $k\in\mathcal H_L$, $L=1,2,\dots, D$}
\BlankLine
\lForEach{cell $i \in \mathcal{C}$}{$\mathcal{C}_0^{(i)} \leftarrow \{i\}$}
initialize $L \leftarrow 0$ \;
\BlankLine
\tcp{if more than one cluster head, continue} 
\While{$|\mathcal H_L|>1$}{
\tcp{make an empty list of next level cluster heads} 
	$\mathcal{H}_{L+1} \leftarrow \emptyset$\; 
	\tcp{cluster head counter}
	$k_{next}\leftarrow 1$\;  
	\tcp{make a list of unpaired cluster heads at the current level} 
	$\mathcal{H}_L^{unpaired} \leftarrow \mathcal{H}_L$\; 
	\While{$|\mathcal{H}_L^{unpaired}|>0$}{
	\If{$|\mathcal{H}_L^{unpaired}| = 1$}{
		\tcp{no unpaired neighbors, ``pair'' with self}
		$k\in\mathcal{H}_L^{unpaired}$ \;
		$\mathcal{H}_{L+1} \leftarrow \mathcal{H}_{L+1}\cup\{k_{next}\}$ \; 
		$\mathcal{C}_{L+1}^{(k_{next})} \leftarrow \mathcal{C}_{L}^{(k)}$ \; 
		$k_{next} \leftarrow k_{next}+1$ \; 
		$\mathcal{H}_L^{unpaired} \leftarrow \mathcal{H}_L^{unpaired}\setminus \{k\}$ \;		
	}
	\Else{
	\tcp{find unpaired cluster with max similarity}
	$(k,k^*) \leftarrow \underset{k,k^\prime \in \mathcal{H}_L^{unpaired},k\neq k^\prime}{\arg\max} \gamma_L(k,k^\prime)$ \;
	$\mathcal{H}_{L+1} \leftarrow \mathcal{H}_{L+1}\cup\{k_{next}\}$ \; 
		$\mathcal{C}_{L+1}^{(k_{next})} \leftarrow \mathcal{C}_{L}^{(k)}\cup\mathcal{C}_{L}^{(k^*)}$ \; 
		$k_{next} \leftarrow k_{next}+1$ \; 
			\tcp{remove paired clusters from list}
		$\mathcal{H}_L^{unpaired} \leftarrow \mathcal{H}_L^{unpaired}\setminus \{k,k^*\}$ \;		
	}
	}
	$L \leftarrow L+1$ \;
}
\caption{Agglomerative Hierarchy Construction}\label{alg:clustering}
\end{algorithm}
\DecMargin{1em}

Agglomerative clustering requires a similarity metric between clusters; at each round, similar clusters are aggregated. Our goal in designing a tree-based approach to spectrum sensing is to prioritize information that nodes can use to limit the interference they generate to other cells. Therefore, we want to aggregate cells together with high potential for interference. To this end, we define the similarity between level-$L$ clusters $k_1,k_2\in\mathcal H_L$ as
\begin{align}
\label{similarity}
	\gamma_L(k_1,k_2) = \sum_{i \in \mathcal{C}_L^{(k_1)}}\sum_{j \in \mathcal{C}_L^{(k_2)}} \phi_{i,j},
\end{align}
or the sum of inter-cluster interference strengths. 

The algorithm proceeds as shown in Algorithm \ref{alg:clustering}. We initialize it with the $N_C$ leaves $\mathcal{C}_0^{(i)}=\{i\},i=1,2,\dots,N_C$.
 Then, at each level $L$, we iterate over all of the clusters, pairing each one with the cluster with which it has the most interference
 (this can be done in order of pairs with maximum similarity (\ref{similarity})). 
  This forms the set of level $L+1$ clusters. If the number of clusters at level $L$ happens to be odd, one cluster may not be paired, in which case it forms its own cluster at level $L+1$. The algorithm continues until the cluster $\mathcal{C}_{L}^{(1)}$ contains the entire network, \emph{i.e.}, a tree is formed.

Agglomerative clustering has complexity $O(N_C^2 \log(N_C))$, where the $N_C^2$ term owes to searching over all pairs of clusters.

 \section{Numerical Results}
  \label{numres}
 In this section, we provide numerical results.
 We consider a $4\times 4$ cells network. We set the parameters as follows: $\rho_I{=}1$, $\rho_B{=}0$, $\lambda{=}1$, $p{=}q{=}0.1$.
 We use the following interference model between a pair of cells (assuming there is no blockage between them):
 \begin{align}
\left\{
\begin{array}{l}
 \phi_{i,j}=\left\|\mathbf p(i)-\mathbf p(j)\right\|^{-\alpha},\ i\neq j,
\\
 \phi_{i,i}=1,
 \end{array}\right.
 \end{align}
 where $\mathbf p(i)$ is the position of cell $i$, $\left\|\mathbf p(i)-\mathbf p(j)\right\|$ is the distance between cells $i$ and $j$,
and  $\alpha=2$ is the pathloss exponent.

\begin{figure}[t]
\centering  
\includegraphics[width=\linewidth,trim = 2mm 0mm 13mm 8mm,clip=true]{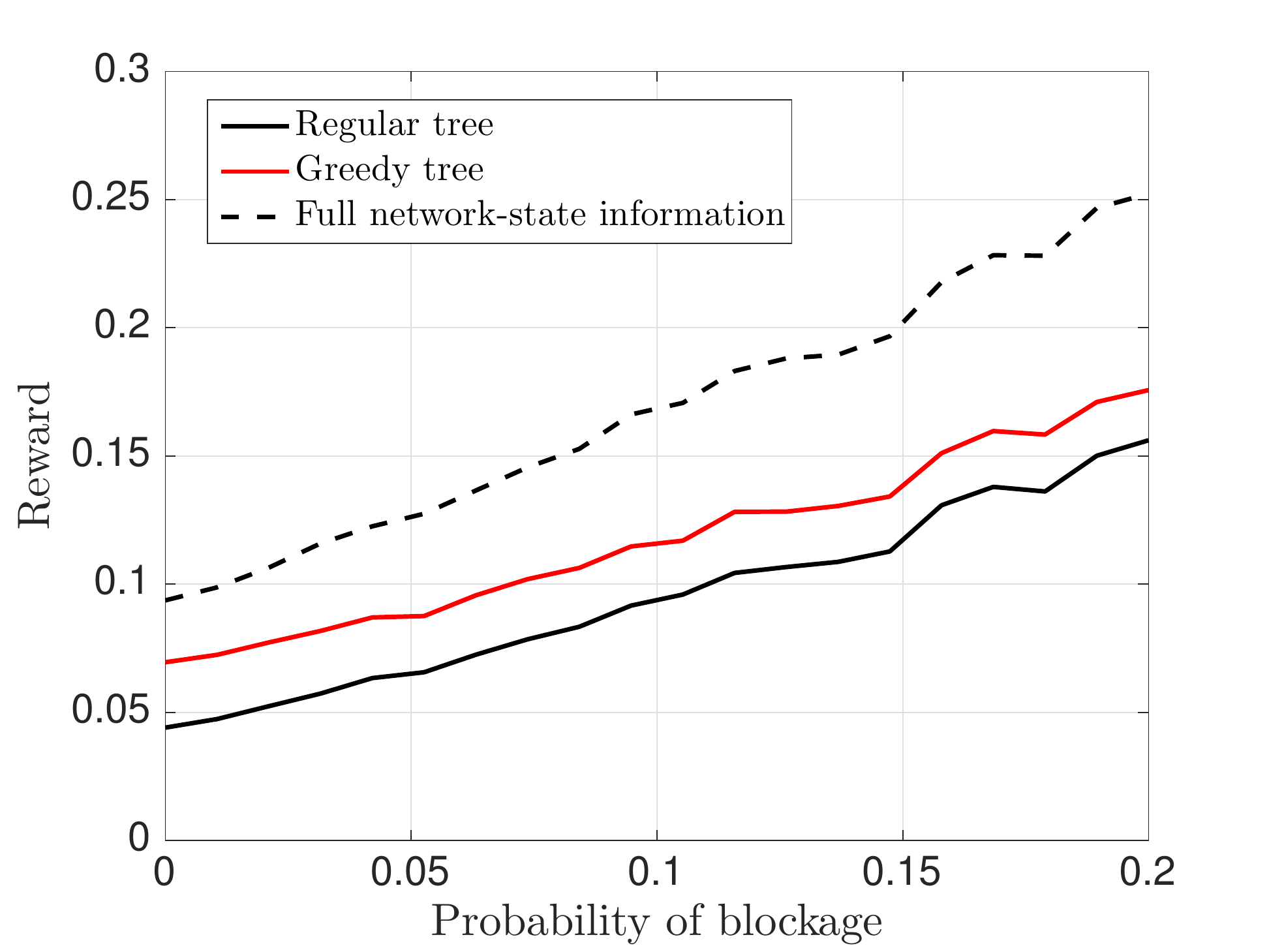}
\caption{Average long-term network reward as a function of the probability of blockage $p_{\mathrm{block}}$.}
\label{fig:simres}
\vspace{-5mm}
\end{figure}

 We define random "walls" between cell boundaries, i.i.d. with probability $p_{\mathrm{block}}\in[0,1]$.
 If a wall is present, then all the cells separated by it experience mutual blockage; thus, if cells $i$ and $j$ are separated by a wall,
 then $\phi_{i,j}=\phi_{j,i}=0$. The blockage topology is generated randomly, for a given blockage  probability $p_{\mathrm{block}}$,
 and a sample average of the performance is computed over $200$ independent trials.
 
In Fig. \ref{fig:simres}, we plot the curve of the network reward as a function of the blockage probability  $p_{\mathrm{block}}$, for different schemes:
\begin{itemize}
\item a scheme in which a \emph{regular tree} is used for state information aggregation. In this case, neighboring cells and clusters are paired together, in order, independently of the interference pattern.
This scheme is similar to \cite{MichelusiGCOM};
\item a scheme in which the tree is generated with Algorithm~\ref{alg:clustering}, by leveraging the specific structure of interference;
\item an upper bound in which the reward is computed under full network state information at each cell, given by (\ref{upboundavg}).
This is computed via Monte Carlo simulation over 5000 independent realizations of $\mathbf b_t$ (at steady-state).
\end{itemize}
We notice that the best performance is obtained under full network state information available at each cell. This is because each cell can leverage the most refined information on
the interference pattern. However, this comes at a huge cost of propagating network state information over the network.
 In contrast, the cost of acquisition of state information can be significantly reduced using aggregation, at the cost of some performance degradation.
Remarkably, by using the greedily optimized algorithm for information aggregation, the performance improves by up to 60\% with respect to a scheme that uses a regular tree.
 In fact, the greedily optimized algorithm leverages the specific structure of interference over the network.

\section{Conclusions}
 \label{conclu}
In this paper, we have proposed a multi-scale approach to spectrum sensing in cognitive cellular networks. 
To reduce the cost of acquisition of network state information, 
we have proposed a hierarchical scheme, that makes it possible to obtain aggregate 
state information at multiple scales, at each cell.
We have studied analytically the performance 
 of the aggregation scheme in terms of the trade-off between the throughput achievable by secondary users and the interference
generated by the activity of these secondary users to primary users.
We have proposed a greedy algorithm to find a multi-scale aggregation tree, matched to the structure of interference, to optimize the performance.
Finally, we have shown performance improvement up to 60\% using a greedily optimized tree, compared to a regular one.

 \iftoggle{Arxiv}{%
\appendices
\section*{Appendix: Proof of Theorem \ref{thm1}}
\begin{proof}
We prove (\ref{eq1}) and (\ref{eq2}) by induction on $t$.
At time $t=0$, given $\boldsymbol{\sigma}_{i,0}=(o_0,o_1,\dots,o_D)$,
using Bayes' rule we obtain
\begin{align}
&\pi_{i,0}(\mathbf b)=\mathbb P(b_{j,0}=b_j,\forall j|\boldsymbol{\sigma}_{i,0}=(o_0,o_1,\dots,o_D))
\\&
=
\frac{\mathbb P(\boldsymbol{\sigma}_{i,0}=(o_0,o_1,\dots,o_D)|b_{j,0}=b_j,\forall j)\mathbb P(b_{j,0}=b_j,\forall j)}
{
\!\!\!\!\!\!\!\!\!\underset{\tilde{\mathbf b}\in\{0,1\}^{N_C}}{\sum}\!\!\!\!\!\!\!\mathbb P(\boldsymbol{\sigma}_{i,0}{=}(o_0,o_1,\dots,o_D)|b_{j,0}=\tilde b_j,\forall j)\mathbb P(b_{j,0}=\tilde b_j,\forall j)
}.
\nonumber
\end{align}
Then, noticing that  $\sigma_{i,0}^{(L)}$ in (\ref{sigmadef2})
is only a function of $b_{j,0},\forall j\in\mathcal C_{\Lambda}^{(i)}(L)$, but is independent of 
$b_{j,0},j\notin \mathcal C_{\Lambda}^{(i)}(L)$,
and since $b_{j,0}$ is statistically independent across cells $j$,
we obtain
\begin{align}
\label{wdsd}
&\pi_{i,0}(\mathbf b)
\\&
=
\frac{\left[
\begin{array}{c}
\prod_{L=0}^D\mathbb P(\sigma_{i,0}^{(L)}=o_L|b_{j,0}=b_j,\forall j\in\mathcal C_{\Lambda}^{(i)}(L))\\
\times\mathbb P(b_{j,0}=b_j,\forall j\in\mathcal C_{\Lambda}^{(i)}(L))
\end{array}\right]
}
{
\!\!\!\!\!\!\!\!\!\underset{\tilde{\mathbf b}\in\{0,1\}^{N_C}}{\sum}\!
\left[
\begin{array}{c}
\prod_{L=0}^D\mathbb P(\sigma_{i,0}^{(L)}=o_L|b_{j,0}=\tilde b_j,\forall j\in\mathcal C_{\Lambda}^{(i)}(L))\\\times\mathbb P(b_{j,0}=\tilde b_j,\forall j\in\mathcal C_{\Lambda}^{(i)}(L))
\end{array}\right]
}.
\nonumber
\end{align}
Using the fact that $\{\mathcal C_{\Lambda}^{(i)}(L),L=0,1,\dots, D\}$ define a partition of $\{1,2,\dots, N_C\}$,
we have that 
\begin{align}
\label{fLprop}
&\sum_{\tilde{\mathbf b}}\prod_{L=0}^D f_L(\tilde b_j,j\in\mathcal C_{\Lambda}^{(i)}(L))
\\&=
\prod_{L=0}^D\sum_{\tilde b_j,j\in\mathcal C_{\Lambda}^{(i)}(L)}\!\!\!\!\!f_L(\tilde b_j,j\in\mathcal C_{\Lambda}^{(i)}(L)),
\nonumber
\end{align}
for generic functions $f_L:\{0,1\}^{|\mathcal C_{\Lambda}^{(i)}(L)|}\mapsto\mathbb R$, hence 
by using this fact in the denominator of (\ref{wdsd}) we obtain
\begin{align}
&\pi_{i,0}(\mathbf b)
\\&
=
\prod_{L=0}^D
\frac{\left[
\begin{array}{c}
\mathbb P(\sigma_{i,0}^{(L)}=o_L|b_{j,0}=b_j,\forall j\in\mathcal C_{\Lambda}^{(i)}(L))\\
\times\mathbb P(b_{j,0}=b_j,\forall j\in\mathcal C_{\Lambda}^{(i)}(L))
\end{array}\right]
}
{
\!\!\!\!\!\!\!\!\!\underset{\tilde b_j\in\{0,1\},\forall j\in\mathcal C_{\Lambda}^{(i)}(L)}{\sum}\!
\left[
\begin{array}{c}
\mathbb P(\sigma_{i,0}^{(L)}=o_L|b_{j,0}=\tilde b_j,\forall j\in\mathcal C_{\Lambda}^{(i)}(L))\\\times\mathbb P(b_{j,0}=\tilde b_j,\forall j\in\mathcal C_{\Lambda}^{(i)}(L))
\end{array}\right]
}.
\nonumber
\end{align}
By Bayes' rule we finally obtain
\begin{align}
&\pi_{i,0}(\mathbf b)
=
\prod_{L=0}^D
\mathbb P(b_{j,0}=b_j,\forall j\in\mathcal C_{\Lambda}^{(i)}(L)|\sigma_{i,0}^{(L)}=o_L),
\nonumber
\end{align}
yielding (\ref{eq1}) for $t=0$.

Note that 
\begin{align}
&\mathbb P(b_{j,0}=b_j,\forall j\in\mathcal C_{\Lambda}^{(i)}(L)|\sigma_{i,0}^{(L)}=o_L)
\\&
=
\frac{\left[
\begin{array}{c}
\mathbb P(\sigma_{i,0}^{(L)}=o_L|b_{j,0}=b_j,\forall j\in\mathcal C_{\Lambda}^{(i)}(L))\\
\times\mathbb P(b_{j,0}=b_j,\forall j\in\mathcal C_{\Lambda}^{(i)}(L))
\end{array}\right]
}
{
\!\!\!\!\!\!\!\!\!\underset{\tilde b_j\in\{0,1\},\forall j\in\mathcal C_{\Lambda}^{(i)}(L)}{\sum}\!
\left[
\begin{array}{c}
\mathbb P(\sigma_{i,0}^{(L)}=o_L|b_{j,0}=\tilde b_j,\forall j\in\mathcal C_{\Lambda}^{(i)}(L))\\\times\mathbb P(b_{j,0}=\tilde b_j,\forall j\in\mathcal C_{\Lambda}^{(i)}(L))
\end{array}\right]
}.
\nonumber
\end{align}
Then, using the definition of $\sigma_{i,0}^{(L)}$ in (\ref{sigmadef2}),
and noticing that it is a function of $b_{j,0},\forall j\in\mathcal C_{\Lambda}^{(i)}(L)$,
 we obtain
\begin{align}
&\mathbb P(b_{j,0}=b_j,\forall j\in\mathcal C_{\Lambda}^{(i)}(L)|\sigma_{i,0}^{(L)}=o_L)
\\&
=
\frac{
\chi\left(\underset{j\in\mathcal C_{\Lambda}^{(i)}(L)}{\sum}b_j=o_L\right)\underset{j\in\mathcal C_{\Lambda}^{(i)}(L)}{\prod}\mathbb P(b_{j,0}=b_j)
}
{
\!\!\!\!\!\!\!\!\!\underset{\tilde b_j\in\{0,1\},\forall j\in\mathcal C_{\Lambda}^{(i)}(L)}{\sum}\!
\chi\left(\underset{j\in\mathcal C_{\Lambda}^{(i)}(L)}{\sum}\tilde b_j=o_L\right)\underset{j\in\mathcal C_{\Lambda}^{(i)}(L)}{\prod}\mathbb P(b_{j,0}=\tilde b_j)
}.
\nonumber
\end{align}
Note that, if $\mathbf b$ is such that 
$\underset{j\in\mathcal C_{\Lambda}^{(i)}(L)}{\sum}b_j\neq o_L$,
then $\mathbb P(b_{j,0}=b_j,\forall j\in\mathcal C_{\Lambda}^{(i)}(L)|\sigma_{i,0}^{(L)}=o_L)=0$ as in 
(\ref{eq2}). Conversely, if 
$\underset{j\in\mathcal C_{\Lambda}^{(i)}(L)}{\sum}b_j=o_L$,
for all vectors $\tilde{\mathbf b}$ such that 
$\underset{j\in\mathcal C_{\Lambda}^{(i)}(L)}{\sum}\tilde b_j=o_L$
we have that 
\begin{align}
\underset{j\in\mathcal C_{\Lambda}^{(i)}(L)}{\prod}\mathbb P(b_{j,0}= b_j)=
\underset{j\in\mathcal C_{\Lambda}^{(i)}(L)}{\prod}\mathbb P(b_{j,0}=\tilde b_j),
\end{align}
since $b_{j,0}$ are identically distributed. Thus it follows that 
\begin{align}
&\mathbb P(b_{j,0}=b_j,\forall j\in\mathcal C_{\Lambda}^{(i)}(L)|\sigma_{i,0}^{(L)}=o_L)
\\&
=
\frac{
\chi\left(\underset{j\in\mathcal C_{\Lambda}^{(i)}(L)}{\sum}b_j=o_L\right)
}
{
\!\!\!\!\!\!\!\!\!\underset{\tilde b_j\in\{0,1\},\forall j\in\mathcal C_{\Lambda}^{(i)}(L)}{\sum}\!
\chi\left(\underset{j\in\mathcal C_{\Lambda}^{(i)}(L)}{\sum}\tilde b_j=o_L\right)
}
\label{x1}
\\&
=
\chi\left(\sum_{j\in\mathcal C_{\Lambda}^{(i)}(L)}b_j=o_L\right)
\frac{o_L!\left(|\mathcal C_{\Lambda}^{(i)}(L)|-o_L\right)!}{|\mathcal C_{\Lambda}^{(i)}(L)|!},
\label{x2}
\end{align}
since there are $\left(\begin{array}{c}|\mathcal C_{\Lambda}^{(i)}(L)|\\o_L\end{array}\right)$
 possible combinations of $\{\tilde b_{j},j\in\mathcal C_{\Lambda}^{(i)}(L)\}$
 such that $\underset{j\in\mathcal C_{\Lambda}^{(i)}(L)}{\sum}\tilde b_j=o_L$.
This proves (\ref{eq2}) for $t=0$.

Now, let $t\geq 1$ and assume (\ref{eq1}) and (\ref{eq2}) hold for $t-1$.
We show that they hold at time $t$ as well.
We have
\begin{align}
&\pi_{i,t}(\mathbf b)=\mathbb P(b_{j,t}=b_j,\forall j|\boldsymbol{\sigma}_{i,\tau},\tau\leq t)
\\&
=\frac{
\mathbb P(b_{j,t}=b_j,\forall j,\boldsymbol{\sigma}_{i,t}=(o_0,o_1,\dots,o_D)|\boldsymbol{\sigma}_{i,\tau},\tau\leq t-1)
}
{
\sum_{\tilde{\mathbf b}}\mathbb P(b_{j,t}=\tilde b_j,\forall j,\boldsymbol{\sigma}_{i,t}=(o_0,o_1,\dots,o_D)|\boldsymbol{\sigma}_{i,\tau},\tau\leq t-1)
}
\nonumber
\\&
=\frac{
\left[
\begin{array}{c}
\mathbb P(\boldsymbol{\sigma}_{i,t}=(o_0,o_1,\dots,o_D)|b_{j,t}=b_j,\forall j)
\\\times\mathbb P(b_{j,t}=b_j,\forall j|\boldsymbol{\sigma}_{i,\tau},\tau\leq t-1)
\end{array}\right]
}
{
\sum_{\tilde{\mathbf b}}
\left[
\begin{array}{c}
\mathbb P(\boldsymbol{\sigma}_{i,t}=(o_0,o_1,\dots,o_D)|b_{j,t}=\tilde b_j,\forall j)
\\\times\mathbb P(b_{j,t}=\tilde b_j,\forall j|\boldsymbol{\sigma}_{i,\tau},\tau\leq t-1)
\end{array}\right]
},
\nonumber
\end{align}
where we have used Bayes' rule. Using the fact that 
$\sigma_{i,t}^{(L)}$ in (\ref{sigmadef2})
is a function of $b_{j,t},\forall j\in\mathcal C_{\Lambda}^{(i)}(L)$, we then obtain
\begin{align}
\label{stepx}
&\pi_{i,t}(\mathbf b)
=\frac{
\left[
\begin{array}{c}
\prod_{L=0}^D\chi\left(\underset{j\in\mathcal C_{\Lambda}^{(i)}(L)}{\sum}b_j=o_L\right)
\\\times\mathbb P(b_{j,t}=b_j,\forall j|\boldsymbol{\sigma}_{i,\tau},\tau\leq t-1)
\end{array}\right]
}
{
\sum_{\tilde{\mathbf b}}
\left[
\begin{array}{c}
\prod_{L=0}^D\chi\left(\underset{j\in\mathcal C_{\Lambda}^{(i)}(L)}{\sum}\tilde b_j=o_L\right)
\\\times\mathbb P(b_{j,t}=\tilde b_j,\forall j|\boldsymbol{\sigma}_{i,\tau},\tau\leq t-1)
\end{array}\right]
}.
\end{align}

Note that, since $\{b_{j,t},t\geq 0\}$ is a Markov chain, we obtain
\begin{align}
&\mathbb P(b_{j,t}=b_j,\forall j|\boldsymbol{\sigma}_{i,\tau},\tau\leq t-1)
\\&
=
\sum_{\hat{\mathbf b}}\mathbb P(b_{j,t}=b_j|b_{j,t-1}=\hat b_j,\forall j)\pi_{i,t-1}(\hat{\mathbf b})
\\&
=
\sum_{\hat{\mathbf b}}\pi_{i,t-1}(\hat{\mathbf b})
\prod_{L=0}^D
\prod_{j\in\mathcal C_{\Lambda}^{(i)}(L)}\mathbb P(b_{j,t}=b_j|b_{j,t-1}=\hat b_j),
\label{laststep}
\end{align}
where
\begin{align}
\pi_{i,t-1}(\hat{\mathbf b})=\mathbb P(b_{j,t-1}=\hat b_j,\forall j|\boldsymbol{\sigma}_{i,\tau},\tau\leq t-1).
\end{align}
In the last step of (\ref{laststep}), we have used the fact that 
$\{b_{j,t}\}$ are statistically independent across cells, and that 
$\{\mathcal C_{\Lambda}^{(i)}(L),L=0,1,\dots, D\}$ define a partition of $\{1,2,\dots, N_C\}$.

Now, using the induction hypothesis, 
we can express $\pi_{i,t-1}$ using (\ref{eq1}), hence
\begin{align}
&\mathbb P(b_{j,t}=b_j,\forall j|\boldsymbol{\sigma}_{i,\tau},\tau\leq t-1)
\\&=
\prod_{L=0}^D
\sum_{\hat b_j,j\in\mathcal C_{\Lambda}^{(i)}(L)}\prod_{ j\in\mathcal C_{\Lambda}^{(i)}(L)}\mathbb P(b_{j,t}=b_j|b_{j,t-1}=\hat b_j)
\\&
\times
\mathbb P(b_{j,t-1}=\hat b_j,\forall j\in\mathcal C_{\Lambda}^{(i)}(L)|\sigma_{i,t-1}^{(L)}=\hat o_L),
\end{align}
where we used (\ref{fLprop}).
Then, using (\ref{eq2})
to express $\mathbb P(b_{j,t-1}=\hat b_j,\forall j\in\mathcal C_{\Lambda}^{(i)}(L)|\sigma_{i,t-1}^{(L)}=\hat o_L)$ 
and substituting the resulting expression in (\ref{stepx}), we obtain
\begin{align}
\label{lasgtpit}
&\pi_{i,t}(\mathbf b)
\\&
{=}
\prod_{L=0}^D
\frac{
\left[\!\!\!
\begin{array}{c}
\underset{\hat b_j,j\in\mathcal C_{\Lambda}^{(i)}(L)}{\sum}
\chi\left(\underset{j\in\mathcal C_{\Lambda}^{(i)}(L)}{\sum}b_j=o_L,
\!\!\!\!
\underset{j\in\mathcal C_{\Lambda}^{(i)}(L)}{\sum}
\hat b_j=\hat o_L\right)
\\\times
\prod_{ j\in\mathcal C_{\Lambda}^{(i)}(L)}\mathbb P(b_{j,t}=b_j|b_{j,t-1}=\hat b_j)
\end{array}\!\!\!\right]
}
{
\left[\!\!\!
\begin{array}{c}
\underset{\hat b_j,\tilde b_j,j\in\mathcal C_{\Lambda}^{(i)}(L)}{\sum}
\chi\left(\underset{j\in\mathcal C_{\Lambda}^{(i)}(L)}{\sum}\tilde b_j=o_L,\!\!\!
\underset{j\in\mathcal C_{\Lambda}^{(i)}(L)}{\sum}
\hat b_j=\hat o_L\right)
\\\times\prod_{ j\in\mathcal C_{\Lambda}^{(i)}(L)}\mathbb P(b_{j,t}=\tilde b_j|b_{j,t-1}=\hat b_j)
\end{array}\!\!\!\right]
}.
\nonumber
\end{align}

Note that, for any $\mathbf b$ and $\tilde{\mathbf b}$,
and for any permutation $P(\cdot):\mathcal C_{\Lambda}^{(i)}(L)\mapsto\mathcal C_{\Lambda}^{(i)}(L)$ of the elements in the set $\mathcal C_{\Lambda}^{(i)}(L)$,
we have
\begin{align}
&\prod_{j\in\mathcal C_{\Lambda}^{(i)}(L)}\mathbb P(b_{j,t}=b_j|b_{j,t-1}=\hat b_j)
\\&
=
\prod_{j\in\mathcal C_{\Lambda}^{(i)}(L)}\mathbb P(b_{j,t}=b_{P(j)}|b_{j,t-1}=\hat b_{P(j)}),
\end{align}
since $b_{j,t}$ are statistically identical across cells.

Thus,
for any $\mathbf b$ and $\tilde{\mathbf b}$ such that 
\begin{align}
\label{fghfd}
\underset{j\in\mathcal C_{\Lambda}^{(i)}(L)}{\sum}b_j=\underset{j\in\mathcal C_{\Lambda}^{(i)}(L)}{\sum}\tilde b_j=o_L,
\end{align}
we obtain
\begin{align}
&\prod_{j\in\mathcal C_{\Lambda}^{(i)}(L)}\mathbb P(b_{j,t}=b_j|b_{j,t-1}=\hat b_j)
\\&
=
\prod_{j\in\mathcal C_{\Lambda}^{(i)}(L)}\mathbb P(b_{j,t}=\tilde b_{j}|b_{j,t-1}=\hat b_{P(j)}),
\end{align}
where $P(\cdot)$ is a proper permutation which maps $b_j$ to $\tilde b_j,\forall j\in\mathcal C_{\Lambda}^{(i)}(L)$.
The existence of such permutation is guaranteed by the condition (\ref{fghfd}),
since $\{b_j,j\in\mathcal C_{\Lambda}^{(i)}(L)\}$ and $\{\tilde b_j,j\in\mathcal C_{\Lambda}^{(i)}(L)\}$ have the same number of zero and non-zero elements.

Therefore, for any $\mathbf b$ and $\tilde{\mathbf b}$ satisfying (\ref{fghfd}),
\begin{align*}
&\underset{\hat b_j,j\in\mathcal C_{\Lambda}^{(i)}(L)}{\sum}
\!\!\!\!\chi\left(\sum_{j\in\mathcal C_{\Lambda}^{(i)}(L)}\!\!\!\!\hat b_j{=}\hat o_L\right)\!\!\!
\prod_{ j\in\mathcal C_{\Lambda}^{(i)}(L)}\!\!\!\mathbb P(b_{j,t}=\tilde b_j|b_{j,t-1}=\hat b_j)
\\&
=
\!\!\!\!\!\!\underset{\hat b_j,j\in\mathcal C_{\Lambda}^{(i)}(L)}{\sum}\!\!
\!\!\!\!\chi\left(\sum_{j\in\mathcal C_{\Lambda}^{(i)}(L)}\!\!\!\!\hat b_{P(j)}{=}\hat o_L\right)\!\!\!
\prod_{ j\in\mathcal C_{\Lambda}^{(i)}(L)}\!\!\!\!\!\!\mathbb P(b_{j,t}{=}\tilde b_j|b_{j,t-1}{=}\hat b_{P(j)})
\\&
=\!\!\!\!\!\!
\underset{\hat b_j,j\in\mathcal C_{\Lambda}^{(i)}(L)}{\sum}
\!\!\!\!\chi\left(\sum_{j\in\mathcal C_{\Lambda}^{(i)}(L)}\!\!\!\!\hat b_j{=}\hat o_L\right)\!\!\!
\prod_{ j\in\mathcal C_{\Lambda}^{(i)}(L)}\!\!\!\mathbb P(b_{j,t}=b_j|b_{j,t-1}=\hat b_j),
\end{align*}
where in the last step we have used the fact that 
the sum over 
$\{\hat b_j,j\in\mathcal C_{\Lambda}^{(i)}(L)\}$
covers the same set of elements as
the sum over the set with permuted entries,
$\{\hat b_{P(j)},j\in\mathcal C_{\Lambda}^{(i)}(L)\}$.

Finally, substituting in (\ref{lasgtpit}) we obtain
\begin{align}
&\pi_{i,t}(\mathbf b)
=
\prod_{L=0}^D
\frac{
\chi\left(\underset{j\in\mathcal C_{\Lambda}^{(i)}(L)}{\sum}b_j=o_L\right)
}
{
\underset{\tilde b_j,j\in\mathcal C_{\Lambda}^{(i)}(L)}{\sum}
\chi\left(\underset{j\in\mathcal C_{\Lambda}^{(i)}(L)}{\sum}\tilde b_j=o_L\right)
}.
\nonumber
\end{align}
Note that this expression implies that 
$\pi_{i,t}(\mathbf b)$ is only a function of  $\boldsymbol{\sigma}_{i,t}$ but is independent of $\boldsymbol{\sigma}_{i,\tau},\tau\leq t-1$;
additionally, $\{b_j,j\in\mathcal C_{\Lambda}^{(i)}(L_1)\}$ is statistically independent of 
$\{b_j,j\in\mathcal C_{\Lambda}^{(i)}(L_2)\}$ for $L_1\neq L_2$, given $\boldsymbol{\sigma}_{i,t}$.
Thus, (\ref{eq1}) follows, where
$\mathbb P(b_{j,t}=b_j,\forall j\in\mathcal C_{\Lambda}^{(i)}(L)|\sigma_{i,t}^{(L)}=o_L)$ is given as in 
(\ref{x1}).
Finally, (\ref{eq2}) follows from (\ref{x1})-(\ref{x2}).

The induction step, and the Theorem, are thus proved.
\end{proof}
}
{
}

\bibliographystyle{IEEEtran}
\bibliography{IEEEabrv,bibliography} 

\end{document}